\documentclass[preprint,authoryear]{elsarticle}
\usepackage{graphicx}
\usepackage{amsmath}
\usepackage{setspace}
\usepackage{amssymb}
\usepackage{epstopdf}
\usepackage{natbib}

\newcommand{\bA}{{\mathrm{A}}}

\newcommand{\bx}{{\mathbf{x}}}
\newcommand{\by}{{\mathbf{y}}}

\newcommand{\bM}{{\mathrm{M}}}
\newcommand{\bN}{{\mathbf{n}}}
\newcommand{\bR}{{\mathbf{r}}}

\newcommand{\bQ}{{\mathrm{Q}}}
\newcommand{\bQd}{{\mathrm{R}}}

\newcommand{\be}{\begin{equation}}
\newcommand{\ee}{\end{equation}}

\newcommand{\beq}{\begin{eqnarray}}
\newcommand{\eeq}{\end{eqnarray}}

\newcommand{\nn}{\nonumber}
\newcommand{\nd}{n'}
\newcommand{\rd}{r'}
\newenvironment{proof}{\noindent {\em Proof}\\}{\hfill $\Box$\\}

\newtheorem{theorem}{Theorem}
\newtheorem{lemma}{Lemma}

\DeclareGraphicsRule{.tif}{png}{.png}{`convert #1 `dirname #1`/`basename #1 .tif`.png}

\journal{Theoretical Population Biology}

\begin{document}
\begin{frontmatter}
\title{Exact coalescent likelihoods for unlinked markers in finite-sites mutation models}

\author[db]{David Bryant\corref{cor1}}
\ead{david.bryant@otago.ac.nz}

\author[ar]{Arindam RoyChoudhury}
\ead{ar2946@columbia.edu}

\author[rb]{Remco Bouckaert}
\ead{remco@cs.auckland.ac.nz}

\author[jf]{Joseph Felsenstein}
\ead{joe@gs.washington.edu}

\author[nr]{Noah Rosenberg}
\ead{noahr@stanford.edu}

\cortext[cor1]{Corresponding author}

\address[db]{Department of Mathematics and Statistics, University of Otago, Dunedin, New Zealand, and the Allan Wilson Centre for Molecular Ecology and Evolution. ph: +64 3 4797889. fax +64 3 479 8427.}
\address[ar]{Mailman School of Public Health, Columbia University, New York, USA.}
\address[rb]{Computational Evolution Group, University of Auckland, Auckland, New Zealand}
\address[jf]{Department of Genome Sciences and Department of
Biology, University of Washington, Box 355065, Seattle, WA 98195-5065.}
\address[nr]{Department of Biology, Stanford University, Stanford, California, USA}

\begin{abstract}
We derive exact formulae for the allele frequency spectrum under the coalescent with mutation, conditioned on allele counts at some fixed time in the past. We consider unlinked biallelic markers mutating according to a finite sites, or infinite sites, model.  This work extends the coalescent theory of unlinked biallelic markers, enabling fast computations of allele frequency spectra in multiple populations. Our results have applications to demographic inference, species tree inference, and the analysis of genetic variation in closely related species more generally.
\end{abstract}

\begin{keyword}
Frequency spectrum\sep Coalescent theory\sep Finite sites\sep Infinite sites\sep Multi-species coalescent
\end{keyword}

\end{frontmatter}

\section{Introduction}

A key insight of coalescent theory is that for neutral markers the process describing the genealogical relationships between individuals in the sample can be separated from the process describing the accumulation of mutations. This separation makes the coalescent a powerful tool for modelling and inference, particularly in combination with modern Monte-Carlo methods for inference in population genetics (reviewed in \cite{Felsenstein04,Marjoram06,Wakeley09}).

In this paper we bring the genealogical and mutation processes back together again. We derive analytical expressions for allele frequency spectra given by the coalescent process with finite-sites and infinite-sites models of mutation conditioned on the number of lineages, and allele frequencies, at some fixed time in the past. Our results apply to unlinked biallelic markers. 

Analytical formulae for frequency spectra in single populations have been known and widely used for some time. These have been derived for the infinitely-many-alleles model and the finite-sites model 
using diffusion approximations \citep{Ewens72,Ewens04} and for the infinite-sites model by solving an exact recurrence \citep{Tavare84,Ewens04}. 

Here we derive formulae for the frequency spectra conditioned on allele and lineage counts in the past. We assume the standard coalescent process operating on a neutral unlinked biallelic marker, with either the  finite-sites or infinite-sites model of mutation. The {\em conditional frequency spectrum} is then given by
\[\Pr[\bR_0 = r| \bN_0=n, \bN_\tau = n_\tau, \bR_\tau = r_\tau] \]
where $\bR_0$ is the number of lineages carrying the derived allele at the present, $\bN_0$ is number of lineages sampled at the present, $\bN_\tau$ is the number of distinct ancestral lineages at some time $\tau$ in the past, and $\bR_\tau$ is the number of these ancestral lineages with the derived allele. 

If we exclude the possibility of mutation then  allele frequencies change only through coalescence events.  \cite{Slatkin96} showed that the frequencies followed an urn model and  derived the a closed form expression for the conditional frequency spectrum:
\begin{equation}
\Pr[\bR_0 = r| \bN_0=n, \bN_\tau = n_\tau, \bR_\tau = r_\tau] = \frac{\binom{r-1}{r_\tau-1} \binom{n-r-1}{n_\tau-r_\tau - 1}}{\binom{n-1}{n_\tau-1}}. \label{eq:slatkin}
\end{equation}

The importance of these conditional probability formulae lies in their use for computing frequency spectra or likelihoods across multiple populations. \cite{Nielsen98a} showed how \eqref{eq:slatkin} could be used to maximum likelihood estimates of divergence times and population parameters. \cite{RoyChoudhury08} used a dynamic programming framework to compute exact likelihoods for entire population trees. 

The methods of both \cite{Nielsen98a} and \cite{RoyChoudhury08} assume models which exclude the possibility of mutation in all populations except the root population. In many contexts, assuming a zero mutation rate is completely appropriate. One important example is the analysis of single nucleotide polymorphism (SNP) data from closely related populations. However other markers have higher mutation rates, and SNP data can be used to analyse individuals from more divergent populations or even different species. In both cases, assuming a zero mutation rate is unrealistic.

Our expressions for the conditional frequency spectrum can be viewed as an extension of Slatkin's formula \eqref{eq:slatkin} to handle finite-sites or infinite-sites mutation.  Our formulae can be incorporated into the same dynamic programming framework as that developed by \cite{RoyChoudhury08}, and used to efficiently compute exact likelihoods, with mutation, across multiple populations or species. In a companion paper\footnote{Bryant, D., Bouckaert, R., Felsenstein, J., Rosenberg, N.A., RoyChoudhury, A. 2001 Inferring Species Trees Directly from Biallelic Genetic Markers: Bypassing Gene Trees in a Full Coalescent Analysis. Submitted to {\em Mol. Biol. Evol.} ArXiv preprint 0910.4193 available at www.arxiv.org} we apply these formulae within a  likelihood algorithm which is used to species trees and population parameters. The algorithm is fast enough to infer species trees and parameters for dozens of individuals using hundreds of thousands of markers.

\section{Setting the scene}

%
%


Consider $n$ individuals sampled from a Wright-Fisher population. Under the neutral  coalescent model, the number of distinct ancestral lineages at some time $t$ in the past follows a pure death process. The rate, backwards in time, for going from $k$ distinct ancestral lineages to $k-1$ lineages is $k(k-1)/2$. We assume the standard rescaling of time in terms of effective population size so that one unit of time corresponds to $2N_e$ generations.

 Consider some time $\tau$ before the present. For any $t$ such that $0 \leq t \leq \tau$, let $\bN_t$ denote the number of distinct lineages ancestral to the sample at time $t$. The conditional distribution of $\bN_t$ given $\bN_0$ was obtained by \cite{Tavare84} (see also \cite{Griffiths80}):
\beq \Pr[\bN_t = m | \bN_0=n] &=&  \sum_{k=m}^{n} e^{-k(k-1)  t/2} \frac{(2k-1)(-1)^{k-m} {m}_{(k-1)} {n}_{[k]}}{m! (k-m)! {n}_{(k)}}, \label{eq:Tavare} \eeq
where $n_{[k]} = n(n-1)(n-2)\cdots(n-k+1)$ and $n_{(k)} = n(n+1)\cdots(n+k-1)$.

\begin{figure}
\begin{center}
\includegraphics[width=3in]{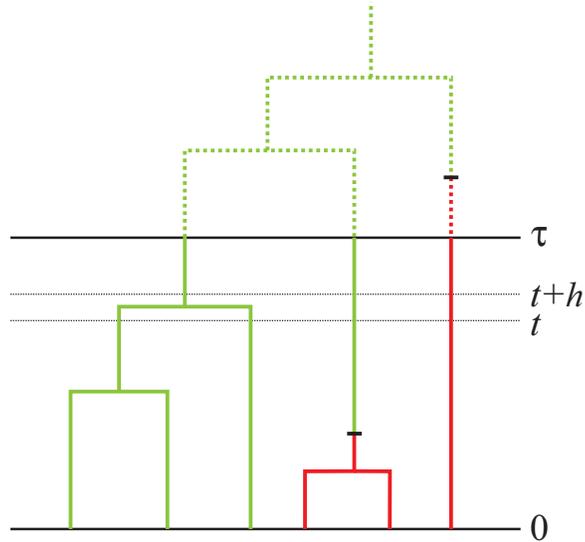}
\end{center}
\caption{\label{fig:CoalescentPic} An instance of a coalescent process followed by finite-sites mutation. In this example, $(\bN_0,\bR_0) = (6,3)$, $(\bN_t,\bR_t) = (4,1)$, $(\bN_{t+h},\bR_{t+h}) = (3,1)$ and $(\bN_\tau,\bR_\tau) = (3,1)$.}
\end{figure}

We consider a locus with two alleles, which we label {\em red} and {\em green}. For any $t$ with $0 \leq t \leq \tau$, let $\bR_t$ denote the number of lineages that carry the red allele at time $t$ before the present. Hence $0 \leq \bR_t \leq \bN_t$. For example, in Figure~\ref{fig:CoalescentPic} we have $\bN_0 = 6$, $\bR_0 = 3$, $\bN_t = 4$, $\bR_t = 1$, $\bN_\tau = 3$ and $\bR_\tau = 1$.

The neutral coalescent model with mutation follows a two-step process. First, a tree  is generated for the sample under the coalescent process. This step is followed by the mutation process, in which nodes of the tree are designated as red or green:
\begin{itemize}
\item Under the {\em infinite sites} model conditional on a polymorphic site, a single mutation is placed uniformly at random along the tree. All descendants of the mutation receive the derived {\em red} allele, whereas the remainder of the tree is assigned the ancestral green allele.
\item Under the {\em finite sites} model, mutation along a lineage is modelled by a continuous time Markov chain with rate $u$ of mutating from red to green and rate $v$ of mutating from green to red. It is assumed that the mutation process is at stationarity along the ancestral lineage, so that the root of the tree is assigned a red allele with probability $\pi = \frac{v}{u+v}$ and a green allele with probability $1-\pi = \frac{u}{u+v}$. 
\end{itemize}

\cite{Tavare84} explored a model combining the coalescent with an infinite alleles mutation model. Equation \eqref{eq:Tavare} above is a special case of a more general result of Tavar\'e involving the distribution of the number of {\em lines of descent}. A line of descent is a subtree of the gene tree whose root is one individual among the ancestral lineages and which includes all descendants in the gene tree not separated from the root by a mutation. Looking backwards in time, the number of lines of descent follows the same process as a coalescent except that lineages can be `removed' by mutation. 

\cite{Ethier87} developed a forward-time process for the coalescent with mutation for the infinite-sites model, acting on multiple linked sites. They obtained stationary probabilities  via a recursion, which can be evaluated using Monte-Carlo techniques \citep{Griffiths96,Stephens00} and  extended to multiple species \citep{Nielsen98}. An alternative recursion was developed by \cite{Fu98} and used to compute a range of related probabilities.  The frequency spectrum at a single segregating site satisfy
\be
\Pr[\bR=r|\bN=n] \propto \frac{1}{r} \label{eq:InfiniteStationary}
\ee
(see, for example, eq. 3.8 of \cite{Griffiths98} or eq. 9.63 in \cite{Ewens04}). Consequently, the expressions for the frequency spectrum of polymorphic sites do not involve the population size $\theta$  (see also \cite{RoyChoudhury10}), an observation also made by \cite{Ewens72} for the infinitely-many-alleles model.

\cite{Tian09} propose a model that combines the coalescent and mutation, but that bears only a superficial resemblance to the coalescent-with-mutation model studied here. Their `colored coalescent' makes the key simplifying assumption that after a coalescence, the allelic state of the parent is independent of the state of the children. This assumption conveniently implies that mutations in different lineages are independent, but it is clearly not an appropriate proxy for the coalescent with mutation.

\section{Conditioned frequency spectra under a finite-sites mutation model}

%
%
%

In this section we assume the finite sites model of mutation and derive an expression for $\Pr[\bR_0 = r | \bN_0 = n,\bN_\tau = n_\tau, \bR_\tau = r_\tau] $, the probability of observing $r$ red alleles in a sample of $n$ individuals taken from the present, conditional on those individuals having $n_\tau$ ancestral lineages at time $\tau$ in the past, $r_\tau$ of which also carried the red allele.

Fix $n_\tau$ and $r_\tau$ and define, for $0 \leq t \leq \tau$ and $0 \leq r_t \leq n_t$,
\begin{equation}
f_t(n_t,r_t)  =  \Pr[\bR_t = r_t | \bN_t = n_t,\bN_\tau = n_\tau, \bR_\tau = r_\tau] \Pr[\bN_\tau=n_\tau|\bN_t = n_t]. \label{eq:fdef}
\end{equation}
We abbreviate this quantity to $\Pr[\bR_t|\bN_t,\bN_\tau, \bR_\tau]\Pr[\bN_\tau| \bN_t]$. 
The goal is to determine $f_0(n_0,r_0)$, since $\Pr[\bN_\tau| \bN_t]$ can be computed using \eqref{eq:Tavare}.

\begin{lemma} \label{lem:cond}
Suppose that $t < t+h < \tau$ (see Figure~\ref{fig:CoalescentPic}). The following identities hold:
\beq 
\Pr[\bN_{t+h}|\bN_t,\bN_\tau,\bR_\tau]
& = & \frac{\Pr[\bN_\tau| \bN_{t+h}]\Pr[\bN_{t+h}|\bN_{t}] } { \Pr[\bN_\tau| \bN_{t}] }, \label{eq:cond1} \\
\Pr[\bR_{t+h}|\bN_{t},\bN_{t+h},\bN_\tau,\bR_\tau] &=& \Pr[\bR_{t+h}|\bN_{t+h},\bN_\tau,\bR_\tau], \label{eq:cond2} \\
\Pr[\bR_t|\bN_{t},\bN_{t+h},\bR_{t+h},\bN_\tau,\bR_\tau] &=& \Pr[\bR_t|\bN_{t},\bN_{t+h},\bR_{t+h}] . \label{eq:cond3}
\eeq
\end{lemma}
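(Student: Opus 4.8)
The plan is to recognize that all three identities are statements about conditional independence arising from the Markov structure of the coalescent-plus-mutation process. The key observation is that this process, read backwards in time, is a Markov chain on the state space of pairs $(\bN_t,\bR_t)$: the future evolution of the number of lineages and red-allele count depends on the past only through the current state. I would therefore organize the proof around identifying, for each identity, which variables are ``upstream'' (closer to the present, i.e.\ smaller time index) and which are ``downstream'' (closer to $\tau$), and then invoking the Markov property to discard the redundant conditioning.

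\medskip\noindent
For \eqref{eq:cond1}, I would first note that $\bN_t$ alone evolves as a pure death process, so the counts $\bN_t, \bN_{t+h}, \bN_\tau$ form a Markov chain in which $\bR_\tau$ plays no role once $\bN_\tau$ is given. Concretely, I would write $\Pr[\bN_{t+h}|\bN_t,\bN_\tau,\bR_\tau]$ using Bayes' rule as $\Pr[\bN_\tau,\bR_\tau|\bN_{t+h},\bN_t]\Pr[\bN_{t+h}|\bN_t]/\Pr[\bN_\tau,\bR_\tau|\bN_t]$, then argue that conditional on $\bN_{t+h}$ the event at the present time $\bN_t$ is irrelevant to the future (Markov), and that $\bR_\tau$ cancels between numerator and denominator because it is determined by the mutation process acting independently given the lineage counts. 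This collapses the expression to the stated ratio involving only the death-process transition probabilities, each computable from \eqref{eq:Tavare}.

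\medskip\noindent
For \eqref{eq:cond2} and \eqref{eq:cond3} the argument is again the Markov property, now applied to the full state $(\bN_t,\bR_t)$. In \eqref{eq:cond2} we are told $\bN_{t+h}$, and the claim is that the present-time count $\bN_t$ adds nothing to our knowledge of $\bR_{t+h}$ beyond what $(\bN_{t+h},\bN_\tau,\bR_\tau)$ already supply; this holds because, conditional on the state at time $t+h$, the state at the earlier time $t$ lies strictly in the future of $t+h$ (going backward) and so is independent of the $\bR$ value at $t+h$ given the coalescent structure. Symmetrically, \eqref{eq:cond3} asserts that once the full state $(\bN_{t+h},\bR_{t+h})$ at the intermediate time is known, the distribution of $\bR_t$ depends only on the transition from $t+h$ down to $t$ and is unaffected by the more distant conditioning $(\bN_\tau,\bR_\tau)$; this is the Markov property applied in the present-ward direction, with $(\bN_{t+h},\bR_{t+h})$ screening off everything further back.

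\medskip\noindent
The main obstacle I anticipate is making the Markov property rigorous for the \emph{joint} $(\bN_t,\bR_t)$ process rather than for $\bN_t$ alone. The number of lineages is genuinely Markovian by construction, but the red-allele count $\bR_t$ depends on how mutations and coalescences interact, so I would need to verify carefully that the two-step generative description (first the coalescent tree, then the mutation process laid on top) does indeed yield a time-homogeneous Markov chain on the pairs $(\bN_t,\bR_t)$ when read backward in time. Once that structure is established, each identity follows by a short Bayes-rule manipulation and cancellation; the substantive work is justifying that $\bR_\tau$ (and the present-time $\bN_t$) may be dropped from the relevant conditionings, which is precisely the screening-off content of the Markov property.
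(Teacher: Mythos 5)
Your proposal is correct and follows essentially the same route as the paper: all three identities are derived from the two-stage construction (coalescent tree generated backward in time, mutation laid on top running forward), using the Markov property of the pure death process $\bN_t$ to drop $\bN_t$ from the conditioning and the fact that $\bR_\tau$ depends only on the tree and mutations above $\tau$ (so $\Pr[\bR_\tau\mid\bN_{t+h},\bN_\tau]=\Pr[\bR_\tau\mid\bN_\tau]$), together with the Markov property of the mutation process given the tree for \eqref{eq:cond3}. The paper's proof is exactly this screening-off argument, stated slightly more tersely.
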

\begin{proof}
 The combined coalescence and mutation process is simulated by first generating a coalescent tree in the direction of increasing $t$, and then evolving the mutation process in the opposite direction. As mutation is a forward process, we have
\begin{align*}
\Pr[\bR_\tau|\bN_{t+h},\bN_\tau] &= \Pr[\bR_\tau|\bN_\tau] \\
\intertext{from which we infer}
\Pr[\bN_{t+h}|\bN_t,\bN_\tau,\bR_\tau] &=\Pr[\bN_{t+h}|\bN_t,\bN_\tau]\\
\intertext{giving \eqref{eq:cond1}. Once we condition on $\bR_\tau$, the state of $\bR_{t+h}$ depends only on the coalescent events between $t+h$ and $\tau$, and the associated mutation process. Hence}
\Pr[\bR_{t+h}|\bN_{t+h},\bN_\tau,\bR_\tau,\bN_t] & = \Pr[\bR_{t+h}|\bN_{t+h},\bN_\tau,\bR_\tau].
\intertext{The final identity \eqref{eq:cond3} follows from the mutation process being Markov, once the coalescent process is fixed.}
\end{align*}
\end{proof}
  
We marginalise over $\bN_{t+h}$ and $\bR_{t+h}$ and  apply  Lemma~\ref{lem:cond} to obtain an expression for $f_t$ in terms of $f_{t+h}$.

\begin{align}
f_t(n_t,r_t) & =  \Pr[\bR_t|\bN_t,\bN_\tau, \bR_\tau]\Pr[\bN_\tau| \bN_t] \nn \\
& =  \sum_{n_{t+h}= n_t}^{n_\tau} \sum_{r_{t+h}=0}^{r_{t+h}} \Pr[\bR_t|\bN_t, \bN_{t+h},\bR_{t+h},\bN_\tau, \bR_\tau] \Pr[\bN_{t+h},\bR_{t+h}|\bN_t,\bN_\tau, \bR_\tau]\Pr[\bN_\tau| \bN_t] \nn \\
& =  \sum_{n_{t+h}= n_t}^{n_\tau} \sum_{r_{t+h}=0}^{r_{t+h}} \Pr[\bR_t|\bN_t, \bN_{t+h},\bR_{t+h}] \Pr[\bR_{t+h}|\bN_{t+h},\bN_t,\bN_\tau, \bR_\tau] \Pr[\bN_{t+h}|\bN_t,\bN_\tau] \Pr[\bN_\tau| \bN_t] \nn  \\
& =  \sum_{n_{t+h}= n_t}^{n_\tau} \sum_{r_{t+h}=0}^{r_{t+h}} \Pr[\bR_t|\bN_t, \bN_{t+h},\bR_{t+h}] \Pr[\bR_{t+h}|\bN_{t+h},\bN_\tau, \bR_\tau] \Pr[\bN_{t+h}|\bN_t,\bN_\tau] \Pr[\bN_\tau| \bN_t] \nn \\
&= \sum_{n_{t+h}= n_t}^{n_\tau} \sum_{r_{t+h}=0}^{r_{t+h}} \Pr[\bR_t|\bN_t, \bN_{t+h},\bR_{t+h}] \Pr[\bR_{t+h}|\bN_{t+h},\bN_\tau, \bR_\tau]  \Pr[\bN_\tau| \bN_{t+h}] \Pr[\bN_{t+h}|\bN_t] \nn \\
& =   \sum_{n_{t+h}= n_t}^{n_\tau} \sum_{r_{t+h}=0}^{r_{t+h}} \Pr[\bR_t|\bN_t, \bN_{t+h},\bR_{t+h}] \Pr[\bN_{t+h}|\bN_t] f_{t+h}(n_{t+h},r_{t+h}) \label{eq:geq}.
\end{align}

We now find expressions for $\Pr[\bR_t|\bN_t, \bN_{t+h},\bR_{t+h}]$ and $\Pr[\bN_{t+h}|\bN_t]$.

From the  coalescent model for neutral loci,
\begin{equation}
\Pr[\bN_{t+h} = \nd|  \bN_{t} = n] = \begin{cases}
\binom{n}{2}  h + o(h) & \textrm{ if $\nd = n-1$;}\\ 
1 - \binom{n}{2}  h + o(h) & \textrm{ if $\nd = n$;} \\
o(h) & \textrm{ otherwise.} \end{cases}
\end{equation}
Assuming  $h$ is small, we ignore events with probability $o(h)$ and hence only  consider the cases of no coalescent events ($\nd = n$) or one coalescent event $(\nd = n-1)$ between time $t$ and $t+h$. For each of these two cases, we consider what can happen to the allele counts with and without mutation.

\begin{figure}[htbp] 
   \centering
   \includegraphics[width=4in]{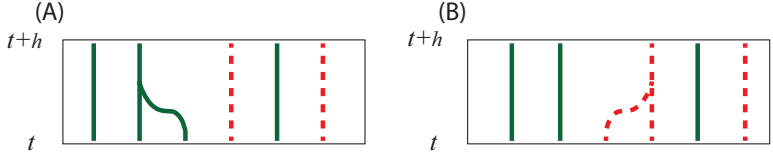} 
   \caption{Illustration of how the number of red (dashed line) or green (solid line) lineages can change at a coalescence. (A) The node at the coalescence is green, so the number of green lineages changes. (B) The coalescence is at a red node.}
   \label{fig:CoalescentOptions}
\end{figure}

First  consider the case of no coalescent events. The probability that more than one mutation occurs between time $t$ and time $t+h$ is $o(h)$, so we only consider values of $\rd$ obtained from $r$ by 0 or 1 mutation: $\rd = r-1,r+1,$ or $r$.

For $\rd = r-1$, the probability that $\bR_t = r$ is the probability that one of the $n-(r-1)$ green lineages at time $t+h$ mutated into a red lineage. Therefore
\begin{eqnarray}
\Pr[\bR_t = r | \bN_t =  \bN_{t+h} = n, \bR_{t+h} = r-1] &=&   (n - (r-1))vh + o(h). \label{eq:rn1}
\end{eqnarray}
For $\rd = r+1$, conditional on $\bN_t = \bN_{t+h}=n$, the probability that $\bR_t = r$ is the probability that one of the $r+1$ red lineages at time $t+h$ mutated into a green lineage. As a result,
\begin{eqnarray}
\Pr[\bR_t = r | \bN_t =  \bN_{t+h} = n, \bR_{t+h} = r+1] &=& (r+1)uh + o(h). \label{eq:rn2}
\end{eqnarray}
For $\rd = r$, the probability that $\bR_t = r$ is the probability that none of the $r$ red lineages or $n - r$ green lineages at time $t+h$ mutated. Thus
\begin{eqnarray}
\Pr[\bR_t = r | \bN_t =  \bN_{t+h} = n, \bR_{t+h} = r] &=&  1 - (n-r)vh - ruh + o(h). \label{eq:rn3}
\end{eqnarray}

Now consider the second case, in which there is one coalescent event and $\bN_{t+h}=n-1$. The probability that both a coalescent event and a mutation event occur between time $t$ and $t+h$ is $o(h)$, so we ignore this possibility. The number of red lineages can change, depending on whether the node at the coalescent event is red or green. If it is red (Figure~\ref{fig:CoalescentOptions} (B)) the number of red lineages will increase from time $t+h$ to time $t$. If it is green then the number of red lineages will remain the same. Hence
\begin{eqnarray}
\Pr[\bR_t = r | \bN_t = n, \bN_{t+h} = n-1, \bR_{t+h} = r-1] & = & \frac{r-1}{n-1} + o(1), \label{eq:rn4} \\
\Pr[\bR_t = r | \bN_t = n, \bN_{t+h} = n-1, \bR_{t+h} = r] &=& \frac{n-1-r}{n-1} + o(1). \label{eq:rn5} 
\end{eqnarray}

 We now substitute \eqref{eq:rn1}---\eqref{eq:rn5} into \eqref{eq:geq},  collecting products of quantities that are $o(h)$. This gives
 \begin{align}
f_t(n,r) & = \left( (n - (r-1))vh + o(h) \right) \left( 1 - \binom{n}{2}h + o(h) \right) f_{t+h}(n,r-1) \nn \\
& \quad + \left( (r+1)uh + o(h) \right) \left(1 - \binom{n}{2}h + o(h) \right) f_{t+h}(n,r+1) \nn \\
& \quad + \left(1 - (n-r)vh - ruh + o(h) \right) \left(1 - \binom{n}{2}h + o(h) \right) f_{t+h}(n,r) \nn \\
& \quad + \left(\frac{r-1}{n-1} + o(1) \right) \left( \binom{n}{2}h + o(h) \right) f_{t+h}(n-1,r-1) \nn \\
& \quad + \left(\frac{n-1-r}{n-1} + o(1) \right) \left( \binom{n}{2}h + o(h) \right) f_{t+h}(n-1,r) \nn \\
& \quad + o(h) \\
 & =  f_{t+h}(n,r-1) (n-r+1)vh \nn \\
 &\quad  + \quad  f_{t+h}(n,r+1) (r+1)uh \nn \\
 &\quad + \quad f_{t+h}(n,r) \left(1 - \binom{n}{2}  h - (n-r)vh -ruh \right) \nn \\
 & \quad + \quad f_{t+h}(n-1,r) \frac{n-1-r}{n-1} \binom{n}{2}  h \nn \\
  & \quad+\quad  f_{t+h}(n-1,r-1) \frac{r-1}{n-1} \binom{n}{2}  h \quad + \quad o(h).
\end{align}

Rearranging, dividing by $h$, and taking the limit as $h \rightarrow 0$ we obtain
\begin{align}
\frac{d}{dt} f_t(n,r) & =  -f_{t}(n,r-1) (n-r+1)v- f_{t}(n,r+1) (r+1)u \nn \\
& \quad -  f_{t}(n-1,r) \frac{n-1-r}{n-1} \binom{n}{2}  -  f_{t}(n-1,r-1) \frac{r-1}{n-1} \binom{n}{2}   \nn \\
& \quad +  f_{t}(n,r) \left(\binom{n}{2}   + (n-r)v + ru \right) \nn \\
& =  -\sum_{m=1}^n \sum_{q=0}^m \bQ_{(n,r);(m,q)}  f_t(m,q).  \label{eq:gDE} 
\end{align}
Here, $\bQ$ is a matrix with rows and columns indexed by pairs $(n,r)$, with 
\begin{align}
\bQ_{(n,r);(n,r-1)} & =  (n-r+1)v & 0<r\leq n\nn \\
\bQ_{(n,r);(n,r+1)} & =  (r+1)u & 0 \leq r < n \nn \\
\bQ_{(n,r);(n,r)} & =  - \binom{n}{2}   - (n-r)v - ru  & 0 \leq r \leq n   \label{Qdef} 
 \\
\bQ_{(n,r);(n-1,r)} & =   \frac{(n-1-r)n}{2}    & 0 \leq r < n   \nn  \\
\bQ_{(n,r);(n-1,r-1)} & =  \frac{(r-1)n}{2}   &0 <r\leq n \nn
 \end{align}
and all other entries zero.  

Note that $f_t(n,r)$ is bounded for all $t<\tau$, but might be undefined at $t = \tau$ if $n \neq n_\tau$. When $n \neq n_\tau$ we  have $\Pr[\bN_\tau = n_\tau|\bN_t=n] \rightarrow 0$ as $t \rightarrow \tau$, so $f_t(n,r) \rightarrow 0$. Hence
\begin{align}
\lim_{t \rightarrow \tau} f_t(n,r) &=  \begin{cases} 1 & \textrm{ if $n=n_\tau$ and $r = r_\tau$;} \\
0 & \textrm{ otherwise.} 
\end{cases} \\
\intertext{This result provides the boundary conditions for the differential equation \eqref{eq:gDE}. Solving and substituting $t=0,$ we obtain}
f_0(n,r) & =  \exp(\bQ \tau)_{(n,r);(n_\tau,r_\tau) } .
\end{align}

We have now established
\begin{theorem} \label{thm:transitions}
 Let $\bQ$ be the matrix defined in \eqref{Qdef}. Under the finite sites model,
\begin{equation}
\Pr[\bR_0 = r| \bN_0=n, \bN_\tau = n_\tau, \bR_\tau = r_\tau] =  \frac{\exp(\bQ \tau)_{(n,r);(n_\tau,r_\tau) } }{\Pr[\bN_\tau \!=\! n_\tau | \bN_0 \!=\! n]}.
 \end{equation}
\end{theorem}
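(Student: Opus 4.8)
The plan is to recognise the system \eqref{eq:gDE}, together with the boundary behaviour noted just above the theorem, as a terminal-value problem for a linear constant-coefficient ODE, to solve it with a matrix exponential, and then to undo the definition of $f$. First I would use the definition \eqref{eq:fdef} of $f_0$ to write
\[
\Pr[\bR_0 = r \mid \bN_0 = n, \bN_\tau = n_\tau, \bR_\tau = r_\tau] = \frac{f_0(n,r)}{\Pr[\bN_\tau = n_\tau \mid \bN_0 = n]},
\]
so that the denominator is delivered by Tavar\'e's formula \eqref{eq:Tavare} and the entire content of the theorem reduces to the identity $f_0(n,r) = \exp(\bQ\tau)_{(n,r);(n_\tau,r_\tau)}$.

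Next I would make the state space finite. Holding the present sample size $\bN_0=n$ fixed and noting that, backwards in time, the number of lineages can only decrease, the only reachable states are pairs $(m,q)$ with $n_\tau \le m \le n$ and $0 \le q \le m$. On this finite index set the matrix $\bQ$ of \eqref{Qdef} is block-triangular in $m$ (transitions go from level $m$ to level $m$ or $m-1$), so $\exp(\bQ t)$ is a well-defined finite matrix for every $t$. Collecting the values $f_t(m,q)$ into a vector $\mathbf{f}_t$ over this index set, equation \eqref{eq:gDE} becomes the linear system $\tfrac{d}{dt}\mathbf{f}_t = -\bQ\,\mathbf{f}_t$, whose general solution is $\mathbf{f}_t = \exp(-\bQ t)\,\mathbf{c}$ for a constant vector $\mathbf{c}$.

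I would then pin down $\mathbf{c}$ from the terminal data. The boundary analysis preceding the theorem gives $\lim_{t\to\tau} f_t(m,q)=1$ when $(m,q)=(n_\tau,r_\tau)$ and $0$ otherwise, i.e.\ $\mathbf{f}_\tau = \mathbf{e}_{(n_\tau,r_\tau)}$, the standard basis vector at $(n_\tau,r_\tau)$. Substituting $t=\tau$ into the general solution gives $\exp(-\bQ\tau)\,\mathbf{c} = \mathbf{e}_{(n_\tau,r_\tau)}$; since $\exp(-\bQ\tau)$ is invertible with inverse $\exp(\bQ\tau)$, we get $\mathbf{c} = \exp(\bQ\tau)\,\mathbf{e}_{(n_\tau,r_\tau)}$, and hence $\mathbf{f}_0 = \exp(0)\,\mathbf{c} = \exp(\bQ\tau)\,\mathbf{e}_{(n_\tau,r_\tau)}$. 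Reading off the $(n,r)$ coordinate yields $f_0(n,r) = \exp(\bQ\tau)_{(n,r);(n_\tau,r_\tau)}$, and dividing by the Tavar\'e normaliser completes the proof.

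I expect the main obstacle to be the rigorous treatment of the terminal condition rather than the algebra: $f_t(m,q)$ need not be defined at $t=\tau$ when $m\neq n_\tau$, because the conditioning event $\{\bN_\tau=n_\tau\mid\bN_t=m\}$ has probability tending to $0$. The delicate point will therefore be to argue that $\mathbf{f}_t$ extends continuously up to $t=\tau$ with the claimed limit, and that the matrix-exponential solution on $[0,\tau)$ matches these boundary values. The facts that make this work are that each coordinate stays bounded as $t\to\tau$ (the conditional probability is at most $1$, while $\Pr[\bN_\tau=n_\tau\mid\bN_t=m]\to 0$ whenever $m>n_\tau$) and that, on the finite reachable state space, the constant-coefficient solution is unique and continuous on the closed interval, so the computed limits fix $\mathbf{c}$ unambiguously.
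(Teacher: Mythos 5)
Your proposal is correct and follows essentially the same route as the paper: the paper's proof of Theorem~\ref{thm:transitions} is exactly the preceding derivation of the system \eqref{eq:gDE} and its terminal conditions at $t=\tau$, followed by the statement that solving and substituting $t=0$ yields $f_0(n,r)=\exp(\bQ \tau)_{(n,r);(n_\tau,r_\tau)}$, after which one divides by $\Pr[\bN_\tau = n_\tau \mid \bN_0 = n]$ as in \eqref{eq:fdef}. Your elaboration of the final step (restricting to the finite reachable state space $n_\tau \le m \le n$, writing the general solution as $\exp(-\bQ t)\mathbf{c}$, and using the boundedness of $f_t$ together with the vanishing normaliser to justify the terminal condition) fills in details the paper leaves implicit but does not change the argument.
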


As a corollary  to Theorem~\ref{thm:transitions}, we derive a new formula for the stationary probabilities
$\Pr[\bR_0 | \bN_0]$, which represent the probability of observing $\bR_0$ individuals carrying the red allele in a sample of size $\bN_0$. A closed form approximation for $\Pr[\bR_0 | \bN_0]$ can  be obtained by way of a diffusion approximation. Under the diffusion model, the allele proportions in the entire population have an approximately beta-distribution and so
$\Pr[\bR_0 | \bN_0]$ follows a {\em beta-binomial} distribution \citep{Ewens04}. The distribution we derive here does not use an explicit diffusion approximation but is instead based solely on the assumptions of the coalescent model. It gives probabilities very close to a beta-binomial, though the probabilities are not {\em exactly} the same. The difference presumably stems from the slightly different model assumptions underlying the standard diffusion and coalescent models.

Consider a single population from which we have taken $\bN_0 = n$ samples. 
\begin{eqnarray}
\Pr[\bR_0 = r | \bN_0 = n] & = & \sum_{n_t = 1}^{n} \sum_{ r_t = 0}^{n_t} \Pr[\bN_t = n_t, \bR_t = r_t|\bN_0 = n ]  \Pr[\bR_0 = r | \bN_0 = n, \bN_t = n_t, \bR_t = r_t] \nn  \\
& = & \sum_{n_t = 1}^{n} \sum_{ r_t = 0}^{n_t}\frac{\Pr[\bN_t = n_t, \bR_t = r_t|\bN_0 = n ] }{ \Pr[\bN_t = n_t | \bN_0 = n]}  \exp(\bQ t)_{(n,r);(n_t,r_t)} \\
& = & \sum_{n_t = 1}^{n} \sum_{ r_t = 0}^{n_t} \Pr[ \bR_t = r_t|\bN_0 = n, \bN_t = n_t ]   \exp(\bQ t)_{(n,r);(n_t,r_t)}  \\
& = & \sum_{n_t = 1}^{n} \sum_{ r_t = 0}^{n_t} \Pr[ \bR_t = r_t|\bN_t = n_t ]   \exp(\bQ t)_{(n,r);(n_t,r_t)} .
\label{eq:stat1}
\end{eqnarray}
We take the limit of the right hand side as $t \rightarrow \infty$. For this computation we examine the spectrum of $\bQ$. The structure of $\bQ$ makes this quite straight-forward.

\begin{figure}[htbp] 
   \centering
   \includegraphics[width=2in]{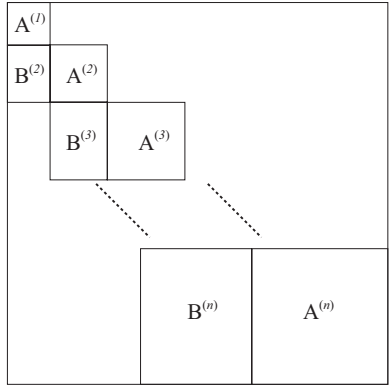} 
   \caption{The bidiagonal block structure of the matrix $\bQ$. }
   \label{fig:Qmatrix}
\end{figure}

Order the rows and columns of $\bQ$ according to the pairs $(n,r)$, in order $(1,0)$, $(1,1)$, $(2,0)$, $(2,1)$, $(2,2)$, $(3,0),\ldots$. Then $\bQ$ has block bidiagonal form as in Figure~\ref{fig:Qmatrix}. For example, when $n=4$ we have
\begin{small}
\[\bQ = \left( \begin{matrix}
 -v & u & 0 & 0 & 0 & 0 & 0 & 0 & 0\\
v & -u & 0 & 0 & 0& 0 & 0 & 0 & 0 \\
1 & 0 & -1-2v & u & 0& 0 & 0 & 0 & 0 \\
0 & 0 & 2v & -1-u-v & 2u & 0 & 0 & 0 & 0\\
0 & 1 & 0 & v & -1-2u & 0 & 0 & 0 & 0 \\
0 & 0 & 3  & 0 & 0 			& -3 - 3v & u & 0 & 0 \\
0 & 0 & 0  & \mbox{$\frac{3}{2}$} & 0 	& 3v & -3 - u-2v & 2u & 0 \\
0 & 0 & 0  & \mbox{$\frac{3}{2}$} & 0	 & 0 & 2v & -3-2u-v & 3u \\
0 & 0 & 0 & 0 & 3 			& 0 & 0 & v & -3 -3u 
\end{matrix} \right) \]
\end{small}

Since $\bQ$ is block triangular, the eigenvalues of $\bQ$ are exactly the eigenvalues of the diagonal blocks $\bA^{(1)},\ldots,\bA^{(n)}$, as can be seen by decomposing the characteristic polynomial $\det(\bQ - \lambda \mathbf{I})$ into $\det(\bA^{(1)} - \lambda \mathbf{I}) \cdot \cdots \cdot \det(\bA^{(n)} - \lambda \mathbf{I})$. Furthermore, each block $\bA^{(i)}$ equals the rate matrix of a birth death process, with the value $ i(i-1)/2 $ subtracted from the diagonal. Hence $\bA^{(i)}$ has strictly negative eigenvalues when $i>1$ (see e.g. \cite{Grimmett01}).  By inspection, $\bA^{(1)}$ has one zero eigenvalue and one negative eigenvalue. Hence $\bQ$ has one zero eigenvalue while the remaining eigenvalues are strictly negative. 

From the diagonalisation of $\bQ$ we see that
\begin{align}
\lim_{t \rightarrow \infty} \exp(\bQ t) &= \bx \by^T, \label{eq:Qxy} \end{align}
where $\by$ is a left 0-eigenvector of $\bQ$ and $\bx^T$ is a right 0-eigenvector of $\bQ$ such that $\by^T \bx = 1$. A left eigenvector is given by 
\[\by = [1,1,0,0,\ldots,0]^T.\]
The corresponding right eigenvector is found by solving the recurrence implied by $\bQ \bx = \mathbf{0}$, that is
\begin{align}
\bx^{(1)} & = \left[\frac{u}{u+v},\frac{v}{u+v}\right]^T \\
\bx^{(i)} & = \left(A^{(i)}\right)^{-1} B^{(i)} \bx^{(i-1)} \hspace{1cm} i=2,3,\ldots,n.
\end{align}
Substituting into \eqref{eq:Qxy} we have
\begin{equation}  \lim_{t \rightarrow \infty} \exp(\bQ t)_{(n,r);(n_t,r_t)} = 
\begin{cases} \mathbf{x}_{(n,r)} & \textrm{ if $n_t = 1$} \\ 0 & \textrm{ otherwise.} \end{cases} \label{eq:stat3} \end{equation}
Most of the summation terms in  \eqref{eq:stat1} are zero, leaving \begin{align}
\Pr[\bR_0 = r | \bN_0 = n]  &= \lim_{t \rightarrow \infty} (\Pr[ \bR_t = 0|\bN_t = 1 ] + \Pr[ \bR_t = 1|\bN_t = 1 ] )   \mathbf{x}_{(n,r)}  \\
& =  \mathbf{x}_{(n,r)}.
\end{align}
We now have an exact formula for allele frequency spectrum for a biallelic marker under the coalescent. Since each matrix $A^{(i)}$ is tridiagonal, each equation
\[\bx^{(i)}  = \left(A^{(i)}\right)^{-1} B^{(i)} \bx^{(i-1)} \hspace{1cm} i=2,3,\ldots,n.\]
takes only $O(n)$ time to compute \citep{Golub96}, making $O(n^2)$ time to compute all of the probabilities $\Pr[\bR_0 = r | \bN_0 = n]$. 

\begin{theorem} \label{thm:root}
Let $\bN$ and $\bR$ be the number of lineages and the number of red lineages sampled from a single population of constant size. Let $\bQ$ be the matrix defined in \eqref{Qdef} and let $\mathbf{x}$ be a non-zero solution for $\bQ \mathbf{x} = \mathbf{0}$, scaled so that $\bx_{(1,0)} + \bx_{(1,1)} = 1$. Then for all $n,r$, $\Pr[\bR \!=\! r | \bN \!=\! n] =   \mathbf{x}_{(n,r)}.$
\end{theorem}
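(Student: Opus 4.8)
\section*{Proof proposal}

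The plan is to obtain the unconditional frequency spectrum as the $t\to\infty$ limit of the conditional formula from Theorem~\ref{thm:transitions}. First I would write $\Pr[\bR = r\mid \bN = n]$ by conditioning on the ancestral state $(\bN_t,\bR_t)$ at an arbitrary past time $t$ and summing, exactly as in the law-of-total-probability expansion \eqref{eq:stat1}. Replacing each $\Pr[\bR_0=r\mid \bN_0=n,\bN_t=n_t,\bR_t=r_t]$ by $\exp(\bQ t)_{(n,r);(n_t,r_t)}/\Pr[\bN_t=n_t\mid\bN_0=n]$ via Theorem~\ref{thm:transitions}, the factor $\Pr[\bN_t=n_t\mid\bN_0=n]$ in the denominator cancels against the marginal implicit in the joint $\Pr[\bN_t=n_t,\bR_t=r_t\mid\bN_0=n]$, reducing each weight to the ancestral conditional $\Pr[\bR_t=r_t\mid\bN_t=n_t]$. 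The sum over $(n_t,r_t)$ is finite, so I may pass the limit inside, and the whole problem collapses to computing $\lim_{t\to\infty}\exp(\bQ t)$.

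The heart of the argument is a spectral analysis of $\bQ$. Ordering the states $(n,r)$ as in the paper makes $\bQ$ block lower-bidiagonal with diagonal blocks $\bA^{(1)},\dots,\bA^{(n)}$, so $\det(\bQ-\lambda\mathbf I)=\prod_i\det(\bA^{(i)}-\lambda\mathbf I)$ and the spectrum of $\bQ$ is the union of the spectra of the blocks. Each $\bA^{(i)}$ is a birth--death generator with $i(i-1)/2$ subtracted from its diagonal, so for $i>1$ its eigenvalues are strictly negative, while the $2\times 2$ block $\bA^{(1)}$ has a simple zero eigenvalue and one negative eigenvalue. Hence $\bQ$ carries a single simple zero eigenvalue and otherwise eigenvalues of strictly negative real part, which is precisely what forces $\exp(\bQ t)$ to converge to the rank-one spectral projection $\bx\,\by^T$ onto the null space, with $\by$ a left and $\bx$ a right null vector normalised by $\by^T\bx=1$.

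Next I would identify the two null vectors explicitly. Verifying $\by^T\bQ=\mathbf 0$ for $\by=[1,1,0,\dots,0]^T$ is a direct column-sum check; for the right vector I would solve $\bQ\bx=\mathbf 0$ using the recurrence imposed by the bidiagonal structure, giving $\bx^{(1)}=[u/(u+v),\,v/(u+v)]^T$ and $\bx^{(i)}=(\bA^{(i)})^{-1}\bB^{(i)}\bx^{(i-1)}$. Because $\by$ is supported only on the two states with $n_t=1$, the limit $\bx\,\by^T$ has nonzero entries only in columns $(1,0)$ and $(1,1)$, so every term of \eqref{eq:stat1} with $n_t>1$ vanishes in the limit. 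The surviving terms carry the factor $\Pr[\bR_t=0\mid\bN_t=1]+\Pr[\bR_t=1\mid\bN_t=1]=1$, collapsing the double sum to $\mathbf x_{(n,r)}$; note that the normalisation $\by^T\bx=1$ is exactly the stated scaling $\bx_{(1,0)}+\bx_{(1,1)}=1$.

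I expect the main obstacle to be the spectral step rather than the bookkeeping: one must show not merely that $0$ is an eigenvalue but that it is simple and that $\exp(\bQ t)$ genuinely converges to the associated projection. This means ruling out a nontrivial Jordan block at $0$---handled here because the zero eigenvalue originates solely from $\bA^{(1)}$---and confirming that the negative-eigenvalue subspaces decay; both follow from the block-triangular decomposition, but they are where care is required. A secondary subtlety is justifying that the coalescent reaches a single lineage almost surely, i.e.\ $\Pr[\bN_t=1\mid\bN_0=n]\to1$ as $t\to\infty$, which underlies the vanishing of the $n_t>1$ contributions.
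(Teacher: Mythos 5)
Your proposal is correct and follows essentially the same route as the paper: the law-of-total-probability expansion over $(\bN_t,\bR_t)$, cancellation of the marginal via Theorem~\ref{thm:transitions}, the block-bidiagonal spectral analysis of $\bQ$ yielding a simple zero eigenvalue, and the collapse of $\lim_{t\to\infty}\exp(\bQ t)$ to the rank-one projection $\bx\,\by^T$ with $\by=[1,1,0,\dots,0]^T$. Your explicit attention to ruling out a Jordan block at $0$ and to $\Pr[\bN_t=1\mid\bN_0=n]\to 1$ is a welcome tightening of details the paper treats only implicitly.
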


\section{Conditional frequency spectra under the infinite-sites mutation model}

We now consider the infinite-sites mutation model. Suppose that mutations accumulate along lineages at a constant rate $\mu$ and that, at time $\tau$ in the past, all lineages carry the ancestral {\em green} allele. The allele distribution for a site under the infinite-sites model is then obtained by conditioning on there being {\em exactly} one mutation between time $0$ and time $\tau$. To this end, let $\bM_t$ denote the event that at most one mutation has occurred, over all the lineages, between time $\tau$ and time $t$.  Apart from the inclusion of this additional event and some simplifications there is little difference between the analysis here for the infinite sites model and the earlier finite sites model analysis.

Fix $n_\tau$ and $r_\tau$ and define, for $0 \leq t \leq \tau$,
\begin{equation}
g_t(n,r)  =  \Pr[\bM_t, \bR_t = r | \bN_t = n,\bN_\tau = n_\tau, \bR_\tau = r_\tau] \Pr[\bN_\tau = n_\tau | \bN_t = n]. \label{eq:fdef2}
\end{equation}
This is simply  the function $f_t(n,r)$ with the inclusion of the event $\bM_t$. We wish to determine $g_0(n,r)$. The first step is to incorporate $\bM_t$ into two of the identities in Lemma~\ref{lem:cond}.

\begin{lemma}
Suppose that $t < t+h < \tau$. Then the following identities hold:
\begin{align}
\Pr[\bM_{t+h},\bR_{t+h}|\bN_{t},\bN_{t+h},\bN_\tau,\bR_\tau] &= \Pr[\bM_{t+h},\bR_{t+h}|\bN_{t+h},\bN_\tau,\bR_\tau], \label{eq:cond2i}\\
\Pr[\bM_t,\bR_t|\bN_{t},\bM_{t+h},\bR_{t+h},\bN_{t+h},\bR_\tau,\bN_\tau] &= \Pr[\bM_t,\bR_t|\bN_{t},\bM_{t+h},\bR_{t+h},\bN_{t+h},\bR_\tau,\bN_\tau] . \label{eq:cond3i}
\end{align}
\end{lemma}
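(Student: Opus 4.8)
The plan is to mirror the proof of Lemma~\ref{lem:cond}, exploiting the same two-step description of the model: the coalescent tree is generated in the direction of increasing $t$, after which the mutation process is run in the direction of decreasing $t$, starting from the allelic configuration at $\tau$. The only new ingredient is the event $\bM_t$, that at most one mutation has fallen on the portion of the tree lying between $\tau$ and $t$. Because the segment $[t+h,\tau]$ is contained in $[t,\tau]$, the event $\bM_t$ implies $\bM_{t+h}$, and passing from $\bM_{t+h}$ to $\bM_t$ is controlled entirely by the mutations placed on the intervening segment between $t$ and $t+h$. I would record this monotonicity at the outset, since it is what lets the additional event be folded into the existing arguments.

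First I would establish \eqref{eq:cond2i}, which is the infinite-sites counterpart of \eqref{eq:cond2}. The key observation is that the pair $(\bM_{t+h},\bR_{t+h})$ is a deterministic function of the allelic state $\bR_\tau$ at time $\tau$ together with the restriction of the tree and of the mutation process to the segment $[t+h,\tau]$. By the Markov property of the coalescent, conditioning on $\bN_{t+h}$ renders the tree on $[t+h,\tau]$ independent of $\bN_t$, since $\bN_t$ is determined by coalescences on $[t,t+h]$, which lie on the opposite side of $t+h$. Hence, given $\bN_{t+h},\bN_\tau,\bR_\tau$, the variable $\bN_t$ carries no further information about $(\bM_{t+h},\bR_{t+h})$, which is exactly \eqref{eq:cond2i}.

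Second I would treat \eqref{eq:cond3i}; I read its right-hand side as the analogue of \eqref{eq:cond3}, namely $\Pr[\bM_t,\bR_t\mid\bN_t,\bM_{t+h},\bR_{t+h},\bN_{t+h}]$, with the distant conditioning on $\bN_\tau$ and $\bR_\tau$ dropped. Here I would appeal to the Markov property of the mutation process along a fixed tree. Once the tree is given, the passage from the configuration at $t+h$ to the configuration at $t$ depends only on the mutations falling on the segment $[t,t+h]$, which is described by $\bN_t$ and $\bN_{t+h}$. Thus, conditional on $(\bM_{t+h},\bR_{t+h})$ and the local lineage counts, the state $(\bM_t,\bR_t)$ is independent of everything further in the past, in particular of $\bN_\tau$ and $\bR_\tau$.

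The step I expect to require the most care is showing that the cumulative event $\bM_t$ can be handled by the same Markov arguments as the red count $\bR_t$ alone. Since $\bM_t$ is path-dependent, recording how many mutations have accumulated over $[t,\tau]$, I would make precise that the augmented state — the red count together with the indicator of whether at most one mutation has so far occurred — itself evolves as a Markov chain along decreasing $t$ once the tree is fixed: given $\bM_{t+h}$ and $\bR_{t+h}$, whether $\bM_t$ holds is decided solely by the presence or absence of a further mutation on $[t,t+h]$, a local event. With this observation in hand, both identities follow from the Markov property of the coalescent and of the mutation process precisely as in Lemma~\ref{lem:cond}.
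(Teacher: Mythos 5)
Your argument is correct and follows essentially the same route as the paper's proof: the two-step description of the model (tree generated in the direction of increasing $t$, mutations run in the opposite direction from $\tau$), the locality of $(\bM_{t+h},\bR_{t+h})$ to the segment $[t+h,\tau]$ once $\bR_\tau$ is given for \eqref{eq:cond2i}, and the Markov property of the mutation process along the fixed tree for \eqref{eq:cond3i}. You also correctly read the tautological right-hand side of \eqref{eq:cond3i} as a typo for the version with $\bN_\tau,\bR_\tau$ dropped, and your extra care with the path-dependent event $\bM_t$ elaborates a point the paper leaves implicit (it rests on the section's standing assumption that all lineages are green at $\tau$, so that $(\bM_{t+h},\bR_{t+h})$ pins down the number of mutations already accrued).
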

\begin{proof}
 The combined coalescence and mutation process is simulated by first generating a coalescent tree in the direction of increasing $t$, and then evolving the mutation process in the opposite direction. Once we condition on $\bR_\tau$, the state of $\bM_{t+h}$ and $\bR_{t+h}$ depends only on the coalescent events between $t+h$ and $\tau$, and the mutation process. Hence
\begin{align*}
\Pr[\bM_{t+h},\bR_{t+h}|\bN_{t+h},\bN_\tau,\bR_\tau,\bN_t] & = \Pr[\bM_{t+h},\bR_{t+h}|\bN_{t+h},\bN_\tau,\bR_\tau].
\intertext{Note that $\bM_\tau$ holds by assumption. As in Lemma~\ref{lem:cond}, the final identity \eqref{eq:cond3i} follows from the mutation process being Markov once the coalescent process is fixed.}
\end{align*}
\end{proof}

We marginalise over $\bN_{t+h}$ and $\bR_{t+h}$  and apply \eqref{eq:cond2i} and \eqref{eq:cond3i} to give an expression for $g_t$ in terms of $g_{t+h}$. This differs from the finite sites case only by the inclusion of the events $\bM_t$.
\begin{align}
g_t(n_t,r_t) & =  \Pr[M_t, \bR_t|\bN_t,\bN_\tau, \bR_\tau]\Pr[\bN_\tau| \bN_t] \nn \\
& =  \sum_{n_{t+h}=1}^{n_t} \sum_{r_{t+h}=0}^{n_{t+h}} \Pr[M_t,\bR_t|\bN_t, M_{t+h},\bN_{t+h},\bR_{t+h},\bN_\tau, \bR_\tau] \Pr[M_{t+h},\bN_{t+h},\bR_{t+h}|\bN_t,\bN_\tau, \bR_\tau]\Pr[\bN_\tau| \bN_t] \nn \\
& =  \sum_{n_{t+h}=1}^{n_t} \sum_{r_{t+h}=0}^{n_{t+h}} \Pr[M_t,\bR_t|\bN_t, M_{t+h},\bN_{t+h},\bR_{t+h}] \Pr[M_{t+h},\bR_{t+h}|\bN_{t+h},\bN_\tau, \bR_\tau] \Pr[\bN_{t+h}|\bN_t,\bN_\tau] \Pr[\bN_\tau| \bN_t] \nn  \\
& =  \sum_{n_{t+h}=1}^{n_t} \sum_{r_{t+h}=0}^{n_{t+h}} \Pr[M_t,\bR_t|\bN_t, M_{t+h},\bN_{t+h},\bR_{t+h}] \Pr[M_{t+h},\bR_{t+h}|\bN_{t+h},\bN_\tau, \bR_\tau]\Pr[\bN_\tau| \bN_{t+h}] \Pr[\bN_{t+h}|\bN_t]  \nn  \\
& =   \sum_{n_{t+h}=1}^{n_t} \sum_{r_{t+h}=0}^{n_{t+h}} \Pr[M_t,\bR_t|\bN_t, M_{t+h},\bN_{t+h},\bR_{t+h}] \Pr[\bN_{t+h}|\bN_t] g_{t+h}(n_{t+h},r_{t+h}) \label{eq:geqi}.
\end{align}

As before, many terms in the final summation of \eqref{eq:geqi} are $o(h)$. We consider the    exceptions. 


 \begin{align*} \intertext{ (i) a coalescent event, where the parent is red }
\Pr[\bM_t,\bR_t = r | \bN_t = n, \bM_{t+h}, \bN_{t+h} = n-1, \bR_{t+h} = r-1] & =  \frac{r-1}{n-1} + o(1); \nn 
\intertext{(ii) a coalescent event where the parent is green}
\Pr[\bM_t,\bR_t = r | \bN_t = n, \bM_{t+h}, \bN_{t+h} = n-1, \bR_{t+h} = r] &= \frac{n-1-r}{n-1} + o(1);  \nn 
\intertext{(iii) the case of the mutation ancestral to all red lineages ($r=1$)}
\Pr[\bM_t,\bR_t = 1|\bN_t=n,\bM_{t+h},\bR_{t+h}=0,\bN_{t+h}=n] &= n\mu h + o(h);\nn
\intertext{(iv)  the case of no coalescent events and no mutation events}
\Pr[\bM_t,\bR_t=r|\bN_t=n,\bM_{t+h},\bR_{t+h}=r,\bN_{t+h}=n] &= 1 - n \mu h + o(h) \nn 
\end{align*}
The main difference with the finite sites model  is that having $\bM_t$ hold implies at most one mutation event occurring between time $t$ and $\tau$. This event is handled in case (iii).

Substitute cases (i)---(iv)  into \eqref{eq:geqi} to obtain
\begin{align*}
g_t(n,r) & =  \left(\frac{r-1}{n-1} + o(1)\right)\left(\binom{n}{2}h + o(h)\right) g_{t+h}(n-1,r-1) \\ 
& \qquad + \left(\frac{n-1-r}{n-1} + o(1)\right)\left(\binom{n}{2}h + o(h)\right) g_{t+h}(n-1,r)\\ 
& \qquad + \delta(r,1)  \left(nuh + o(h) \right)  g_{t+h}(n,0) \\
& \qquad + \left(1 - nuh + o(h)\right) \left(1 - \binom{n}{2}h + o(h)\right) g_{t+h}(n,r),
\end{align*}
where $g_t(n,r)$ and $g_{t+h}(n,r)$ are zero unless $0 \leq r \leq n$ and $1 \leq n$.

Rearranging and taking the limit as $h \longrightarrow 0$ we obtain 
\begin{align}
\frac{d}{dt} g_t(n,r) & = - \sum_{\nd,\rd} \bQd_{(n,r);(n'r')} g_t(\nd,\rd), 
\end{align}
where $\bQd$ is the matrix given by
\begin{align}
\bQd_{(n,0);(n,1)} & =  nu && 0  < n \nn \\
\bQd_{(n,r);(n-1,r)} & =   \frac{(n-1-r)n}{2}    && 0 \leq r < n     \label{Qdefb}  \\
\bQd_{(n,r);(n-1,r-1)} & =  \frac{(r-1)n}{2}   &&0 <r\leq n \nn \\
\bQd_{(n,r);(n,r)} & =  - \binom{n}{2}   - nu \nn && 0 \leq r \leq n
\end{align}
and all other entries $0$.

As in the finite sites case we have the boundary conditions
\begin{align}
\lim_{t \rightarrow \tau} g_t(n,r) &=  \begin{cases} 1 & \textrm{ if $n=n_\tau$ and $r = r_\tau$;} \\
0 & \textrm{ otherwise.} \end{cases} \\
\intertext{Solving for $\tau=0$ gives}
g_0(n,r) & =  \exp(\bQd \tau)_{(n,r);(n_\tau,r_\tau) } 
\end{align}
and so 
\begin{align}
\Pr[\bM_0,\bR_0 = r| \bN_0=n, \bN_\tau = n_\tau, \bR_\tau = r_\tau] & = f_0(n,r) \\
& =  \frac{\exp(\bQd \tau)_{(n,r);(n_\tau,r_\tau) } }{\Pr[\bN_\tau = n_\tau | \bN_0 = n_0]}.
\end{align}

Normalising now gives 
\begin{theorem}
 Let $\bQd$ be the matrix defined in \eqref{Qdefb}. Under the infinite-sites model, at a segregating site
\begin{equation}
\Pr[\bR_0 = r| \bN_0=n, \bN_\tau = n_\tau, \bR_\tau =0] =  \frac{\exp(\bQd \tau)_{(n,r);(n_\tau,0) } }{\sum_{\rd=1}^n \exp(\bQd \tau)_{(n,\rd);(n_\tau,0) } }.
 \end{equation}
for $0<r<n$.
\end{theorem}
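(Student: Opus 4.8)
The plan is to obtain the infinite-sites spectrum from the displayed identity immediately preceding the statement,
\[
\Pr[\bM_0, \bR_0 = r \mid \bN_0 = n, \bN_\tau = n_\tau, \bR_\tau = 0] = \frac{\exp(\bQd\tau)_{(n,r);(n_\tau,0)}}{\Pr[\bN_\tau = n_\tau \mid \bN_0 = n]},
\]
by turning this joint-with-$\bM_0$ probability into the conditional frequency spectrum through a single normalisation. First I would pin down the conditioning event. Because every ancestral lineage is green at time $\tau$ (the event $\bR_\tau = 0$), the only way to see a red lineage at the present is a mutation strictly after $\tau$; and since mutations are persistent under infinite sites, any present-day red allele survives to $t = 0$. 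Hence, on the event $\bM_0$ that at most one mutation has occurred, the outcome $\bR_0 = r$ with $r \geq 1$ is exactly the event that precisely one mutation occurred and left $r$ red descendants, whereas $\bR_0 = 0$ is the zero-mutation event. This identifies the infinite-sites conditioning of the text (exactly one mutation on the tree, i.e.\ a single segregating site) with $\bM_0 \cap \{\bR_0 \geq 1\}$, so the displayed formula already supplies the numerator.

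Next I would apply the definition of conditional probability to the one-mutation event. Writing the conditional spectrum as the one-mutation joint probability divided by the total one-mutation probability,
\[
\Pr[\bR_0 = r \mid \bN_0 = n, \bN_\tau = n_\tau, \bR_\tau = 0, \text{ one mutation}] = \frac{\Pr[\bM_0, \bR_0 = r \mid \cdots]}{\sum_{\rd=1}^{n} \Pr[\bM_0, \bR_0 = \rd \mid \cdots]},
\]
and substituting the preceding identity, the common factor $\Pr[\bN_\tau = n_\tau \mid \bN_0 = n]$ cancels between the numerator and every denominator term, leaving precisely $\exp(\bQd\tau)_{(n,r);(n_\tau,0)} \big/ \sum_{\rd=1}^{n} \exp(\bQd\tau)_{(n,\rd);(n_\tau,0)}$. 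The summation starts at $\rd = 1$ because the $\rd = 0$ outcome is the zero-mutation event, excluded by the conditioning, and runs up to $\rd = n$ since a single mutation colours red some clade of between one and $n$ of the present lineages. The restriction $0 < r < n$ in the statement then isolates the polymorphic (two-allele) counts.

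The step I expect to be the main obstacle is the first one: making the identification $\{\text{segregating site}\} \leftrightarrow \bM_0 \cap \{\bR_0 \geq 1\}$ fully precise, and in particular settling the treatment of the $r = n$ term in the normalising sum. I would examine the graph structure of $\bQd$ from \eqref{Qdefb}: its off-diagonal entries let $r$ fall only through a red-parent coalescence $(m,r)\to(m-1,r-1)$ and rise only at the single mutation step, so one can trace exactly which pairs $(n,r)$ connect to the column index $(n_\tau,0)$ in the power-series expansion of $\exp(\bQd\tau)$. This reachability analysis shows that $\exp(\bQd\tau)_{(n,r);(n_\tau,0)}$ is supported on precisely the counts $r$ producible by one mutation given $n_\tau$ ancestral lineages, which both confirms that the denominator is a genuine normalising constant and reconciles the $r=n$ outcome (fixation of the derived allele, the sole non-segregating one-mutation outcome, possible only when $n_\tau = 1$) with the stated range $0 < r < n$. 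Once this bookkeeping is in place, the remainder is the routine cancellation above.
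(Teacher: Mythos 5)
Your proposal is correct and follows essentially the same route as the paper: the paper's own proof of this theorem is the single word ``Normalising,'' i.e.\ exactly your step of identifying the one-mutation (segregating-site) conditioning with $\bM_0 \cap \{\bR_0 \geq 1\}$ and dividing the preceding expression for $\Pr[\bM_0, \bR_0 = r \mid \cdots]$ by its sum over $\rd = 1,\ldots,n$, with the factor $\Pr[\bN_\tau = n_\tau \mid \bN_0 = n]$ cancelling. Your additional bookkeeping on why the sum starts at $\rd = 1$ and on the $\rd = n$ term (reachable only when $n_\tau = 1$) is correct and makes explicit what the paper leaves implicit.
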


We note that if $r_\tau>0$ then the distribution of $\bR_0$ is determined solely by the coalescent events, and
\begin{equation}
\Pr[\bR_0 = r| \bN_0=n, \bN_\tau = n_\tau, \bR_\tau = r_\tau] = \frac{\binom{r-1}{r_\tau-1} \binom{n-r-1}{n_\tau - r_\tau - 1}}{\binom{n-1}{n_\tau-1}},
\end{equation}
see \cite{Slatkin96}. 


\section{Conclusion}

We have derived exact formula for coalescent-based likelihoods from unlinked biallelic markers. Our results generalise urn model results of \cite{Slatkin96} by incorporating mutation, under both a finite sites or infinite sites model. As a consequence, the methods developed by \cite{Nielsen98a} and \cite{RoyChoudhury08} for analysing SNP data from single and multiple populations can also be extended to include mutation. This will have less of an impact for single population analyses, but will be of increasing importance in analyses of multiple, closely related, species. 

One consequence of our results which will have immense practical significance is that coalescent likelihood calculations can be carried out without the need to integrate over gene trees. This  is critical if coalescent analyses are to go beyond the analyses of small numbers of genes. Indeed, the formulae derived here enable a full coalescent analysis of hundreds of thousands of unlinked SNP loci.

\subsection*{Acknowledgements}

We sincerely thank Elizabeth A. Thompson for her valuable insight and scholarly support throughout this project. Her help was instrumental for the execution of this project. DB was funded by a Marsden grant, an Alexander von Humboldt Fellowship (kindly hosted by William Martin) and the Allan Wilson Centre for Molecular Ecology and Evolution. AR was supported in part by NIH program project grant GM-45344, and by NIH Grant R01 GM071639-01A1 (PI: J Felsenstein). JF was funded by National Institutes of
Health grant R01 GM071639, by National Institutes of Health grant
R01 HG004839 (PI: Mary Kuhner), and by interim ``life support'' funds from the
Department of Genome Sciences, University of Washington. NAR was funded by NSF grant DBI-1062394.

\bibliographystyle{elsarticle-harv}

\end{document}